\documentclass{article}

\usepackage[numbers, sort, compress]{natbib}
\makeatletter
\usepackage[verbose=true,letterpaper]{geometry}
\AtBeginDocument{
  \newgeometry{
    textheight=9in,
    textwidth=5.5in,
    top=1in,
    headheight=12pt,
    headsep=25pt,
    footskip=30pt
  }
}

\usepackage[utf8]{inputenc} 
\usepackage[T1]{fontenc}    
\usepackage{hyperref}       
\usepackage{url}            
\usepackage{booktabs}       
\usepackage{amsfonts}       
\usepackage{nicefrac}       
\usepackage{microtype}      
\usepackage{xcolor}         

\usepackage{mathtools}
\usepackage{amsthm}
\usepackage{amssymb}
\usepackage{tikzit}
\usepackage{tikz-cd}
\usepackage{mathtools}
\usepackage{quantikz}
\usepackage{adjustbox}
\usepackage{multirow}
\usepackage[nameinlink, capitalise]{cleveref}

\DeclarePairedDelimiter\abs{\lvert}{\rvert}%
\DeclarePairedDelimiter\norm{\lVert}{\rVert}%

\tikzstyle{gate}=[shape=rectangle, text height=1.5ex, text depth=0.25ex, yshift=0.5mm, fill=white, draw=black, minimum height=5mm, yshift=-0.5mm, minimum width=5mm, font={\small}]
\tikzstyle{big gate}=[shape=rectangle, text height=1.5ex, text depth=0.25ex, yshift=0.5mm, fill=white, draw=black, minimum height=10mm, yshift=-0.5mm, minimum width=5mm, font={\small}]
\tikzstyle{Z dot}=[inner sep=0mm, minimum size=2mm, shape=circle, draw=black, fill={rgb,255: red,221; green,255; blue,221}]
\tikzstyle{Z phase dot}=[minimum size=5mm, font={\footnotesize\boldmath}, shape=rectangle, rounded corners=2mm, inner sep=0.4mm, outer sep=-2mm, scale=0.8,  draw=black, fill={rgb,255: red,221; green,255; blue,221}]
\tikzstyle{X dot}=[Z dot, shape=circle, draw=black, fill={rgb,255: red,255; green,136; blue,136}]
\tikzstyle{X phase dot}=[Z phase dot, fill={rgb,255: red,255; green,136; blue,136}, font={\footnotesize\boldmath}]
\tikzstyle{hadamard}=[fill=yellow, draw=black, shape=rectangle, inner sep=0.6mm, minimum height=1.5mm, minimum width=1.5mm]
\tikzstyle{paulibox}=[fill={rgb,255: red,221; green,221; blue,255}, draw=black, shape=rectangle, inner sep=0.6mm, minimum height=5mm, minimum width=5mm, font={\footnotesize}, text height=1.5ex, text depth=0.25ex]
\tikzstyle{vertex}=[inner sep=0mm, minimum size=1mm, shape=circle, draw=black, fill=black]
\tikzstyle{vertex set}=[inner sep=0mm, minimum size=1mm, shape=circle, draw=black, fill=white, font={\footnotesize\boldmath}]
\tikzstyle{small black dot}=[fill=black, draw=black, shape=circle, inner sep=0pt, minimum width=1.2mm]
\tikzstyle{cnot ctrl}=[fill=black, draw=black, shape=circle, inner sep=0pt, minimum width=1.2mm]
\tikzstyle{cnot targ}=[fill=white, draw=white, shape=circle, label={center:$\oplus$}, inner sep=0pt, minimum width=2.1mm]
\tikzstyle{ket}=[fill=white, draw=black, shape=regular polygon, regular polygon sides=3, regular polygon rotate=-30, scale=0.7, inner sep=1pt]
\tikzstyle{bra}=[fill=white, draw=black, shape=regular polygon, regular polygon sides=3, regular polygon rotate=30, scale=0.7, inner sep=1pt]
\tikzstyle{scalar}=[shape=rectangle, text height=1.5ex, text depth=0.25ex, yshift=0.5mm, fill=white, draw=black, minimum height=5mm, yshift=-0.5mm, minimum width=5mm, font={\small}]
\tikzstyle{clabel}=[fill=white, draw=none, shape=rectangle, inner sep=1pt]
\tikzstyle{empty diagram}=[draw={gray!40!white}, dashed, shape=rectangle, minimum width=1cm, minimum height=1cm]
\tikzstyle{amap}=[fill=white, draw=black, shape=NEbox]
\tikzstyle{amap conj}=[fill=white, draw=black, shape=NWbox]
\tikzstyle{amap adj}=[fill=white, draw=black, shape=SEbox]
\tikzstyle{amap trans}=[fill=white, draw=black, shape=SWbox]
\tikzstyle{astate}=[fill=white, draw=black,line width=1.6pt, shape=NEtriangle]
\tikzstyle{astate conj}=[fill=white, draw=black, line width=1.6pt, shape=NWtriangle]
\tikzstyle{astate adj}=[fill=white, draw=black, line width=1.6pt, shape=SEtriangle]
\tikzstyle{astate trans}=[fill=white, draw=black, line width=1.6pt, shape=SWtriangle]

\tikzstyle{hadamard edge}=[-, dashed, dash pattern=on 2pt off 0.5pt, thick, draw={rgb,255: red,68; green,136; blue,255}]
\tikzstyle{box edge}=[-, dashed, dash pattern=on 2pt off 0.5pt, thick, draw={rgb,255: red,203; green,192; blue,225}]
\tikzstyle{gray dashed edge}=[-, dashed, dash pattern=on 2pt off 0.5pt, thick, {gray!60!white}]
\tikzstyle{brace edge}=[-,  decorate, decoration={brace,amplitude=1mm,raise=-1mm}]
\tikzstyle{diredge}=[->]
\tikzstyle{double edge}=[-, double, shorten <=-1mm, shorten >=-1mm, double distance=2pt]
\tikzstyle{gray edge}=[-, {gray!60!white}]
\tikzstyle{blue edge}=[-, {blue!60!white}]
\tikzstyle{pointer edge}=[->, very thick, gray]
\tikzstyle{boldedge}=[-, line width=1.6pt, shorten <=-0.17mm, shorten >=-0.17mm]
\tikzstyle{bidir edge}=[<->, very thick, draw={rgb,255: red,191; green,191; blue,191}]
\tikzstyle{new edge style 0}=[-, fill=white]
\pgfdeclarelayer{edgelayer}
\pgfdeclarelayer{nodelayer}
\pgfsetlayers{edgelayer,nodelayer,main}
\tikzstyle{none}=[inner sep=0pt]

\renewcommand{\ket}[1]{\ensuremath{|#1\rangle}}
\renewcommand{\bra}[1]{\ensuremath{\langle #1|}}
\renewcommand{\braket}[2]{\ensuremath{\langle #1|#2\rangle}}
\newcommand{\ketbra}[2]{\ensuremath{|#1\rangle\!\langle #2|}}

\newtheorem{lemma}{Lemma}

\title{Quixer: A Quantum Transformer Model}

\author{%
  Nikhil Khatri  \quad
  Gabriel Matos \quad
  Luuk Coopmans \quad
  Stephen Clark \\\\
  Quantinuum \\
  17 Beaumont St., Oxford OX1 2NA, UK \\
  \small{\texttt{\{nikhil.khatri,gabriel.matos,luuk.coopmans,steve.clark\}@quantinuum.com}} \\
}

\bibliographystyle{unsrtnat}

\newcommand{\uprep}{U_{\text{PREP}}}

\newcommand{\usel}{U_{\text{SEL}}}

\begin{document}

\maketitle

\begin{abstract}
Progress in the realisation of reliable large-scale quantum computers has motivated research into the design of quantum machine learning models.
We present Quixer: a novel quantum transformer model which utilises the Linear Combination of Unitaries and Quantum Singular Value Transform primitives as building blocks. Quixer operates by preparing a superposition of tokens and applying a trainable non-linear transformation to this mix. We present the first results for a quantum transformer model applied to a practical language modelling task, obtaining results competitive with an equivalent classical baseline. In addition, we include resource estimates for evaluating the model on quantum hardware, and provide an open-source implementation for classical simulation. We conclude by highlighting the generality of Quixer, showing that its parameterised components can be substituted with fixed structures to yield new classes of quantum transformers.
\end{abstract}

\section{Introduction}
\label{sec:introduction}

Remarkable developments have been achieved in natural language processing, leading to the advent and popularisation of large language models (LLMs) \cite{touvron2023llama, gemmateam2024gemma, jiang2023mistral}. At the same time, significant progress has been made in the field of quantum computing. While current quantum devices are still noisy~\cite{Preskill2018}, rapid improvements \cite{dasilva2024demonstration, delaney2024scalable} are driving the field into the error-corrected, fault-tolerant regime, where algorithms with asymptotic speed-up over their classical counterparts can be successfully run~\cite{Montanaro2016}.

While powerful, LLMs are notoriously costly to train due to the number of parameters in state-of-the-art architectures~\cite{kaplan2020scaling}. Thus, it is of great practical interest to find alternative efficient, yet performant models. Given that quantum computers are known to provide a complexity-theoretic advantage in certain domains~\cite{Shor1997, Grover1996}, it is natural to explore what a quantum version of the transformer architecture could look like. While the original proposal for the Transformer model~\cite{vaswani2017} uses a dot product self-attention mechanism, other architectures employ alternatives which are nonetheless performant e.g. the FNet~\cite{lee2022fnet}.

In this work, we propose Quixer (QUantum mIXER), a quantum transformer model that incorporates quantum algorithmic primitives in a novel attention mechanism. A Linear Combination of Unitaries (LCU)~\cite{childs2012} procedure is employed to create a superposition of token unitaries, and a Quantum Singular Value Transform (QSVT)~\cite{Gilyn2019} primitive is used to further apply a non-linear transformation to this superposition. Our primary contributions are the first result for a quantum transformer model applied to a practical language modelling task, along with a novel quantum attention mechanism built using the LCU and QSVT.

\Cref{fig:quixer} presents the high-level components of the Quixer model, which we describe in detail in \cref{sec:model}. \Cref{sec:background} provides the necessary background information on classical transformers and quantum computing. In \cref{sec:results}, we present results for a language modelling task on the Penn Treebank dataset, comparing the performance of Quixer against various contemporary classical models. Finally, in \cref{sec:framework}, we outline the extensible nature of Quixer, and highlight directions for future research.

\begin{figure}[ht]
    \centering
    \includegraphics[scale=0.70]{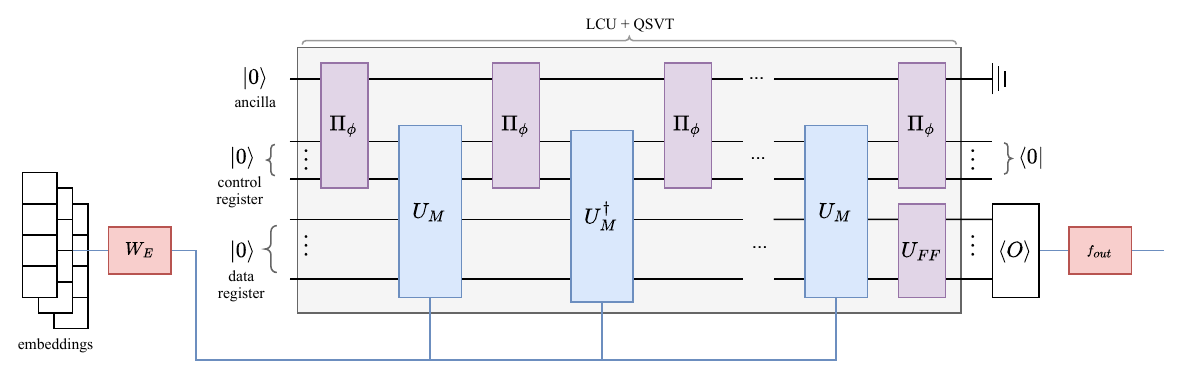}
    \caption{Quixer model architecture. Token embeddings are linearly mapped to angles by $W_E$. These angles parameterise unitary quantum circuits acting on a data register. A linear superposition of these unitaries is prepared using an LCU $U_M$, to which a polynomial, specified through the phases in the $\Pi_\phi$ gates, is applied through a QSVT procedure. A feed-forward unitary $U_{\textit{FF}}$ is then applied to the data register. Data is read out from the resulting quantum state using multiple measurement operators, and the resulting expectation values $\langle O \rangle$ are classically processed by $f_{\textit{out}}$ to produce the final output of the model.}
    \label{fig:quixer}
\end{figure}

\subsection{Related work}
\label{sec:related_work}

Other types of quantum models have been proposed for natural language processing, such as quantum recurrent neural networks (QRNN)~\cite{bausch_qrnn_2020, widdows2024natural}. In this section, we focus on literature concerning quantum models which are related either to transformers or language modelling. We refer the reader to \citet{widdows2024natural} for a more general overview of quantum natural language processing. To the best of our knowledge, other than the present work, the only previous results for language modelling using a quantum architecture have been presented by \citet{basile2017}, which uses a QRNN-like model to tackle a phone-recognition task on the TIMIT corpus.

Recently, there have been several proposals for quantum versions of the transformer architecture. \citet{Cherrat2024} tackle a classification task using a family of MNIST datasets, and employ methods which differ significantly from ours. In particular, the authors use a data loading method to directly encode a trainable matrix on the amplitudes of a quantum state, and process this using a specialised ``orthogonal layer'' circuit. In contrast, \citet{liao2024gpt} and \citet{guo2024quantum} attempt to directly quantise each component of the transformer architecture in a modular fashion. Their work is theoretical in nature and does not present results on a concrete language task.

Another proposal for a quantum transformer architecture is given by \citet{zhao2024gqhan}, which directly incorporates a mechanism akin to Grover's algorithm, and is trained on the Fashion MNIST dataset. Likewise, \citet{gao2023fast} use a Grover-like mechanism for sparse attention, though their work is fully theoretical. Finally, \citet{zhao2023qksan} implement a transformer based on a quantum kernel method, and perform binary classification on MNIST and Fashion MNIST.

\section{Background}
\label{sec:background}

\subsection{The Transformer architecture} \label{sec:transformer}
The heart of a Transformer is the multi-head dot product self-attention mechanism introduced by Vaswani et al. \cite{vaswani2017}. Several variations on the Transformer have been proposed in the literature, which replace the dot product self-attention mechanism with alternate methods to mix information between tokens. For instance, some substitute the quadratic-time dot product self-attention mechanisms with linear time attention mechanisms \cite{katharopoulos2020transformers,wang2020linformer}. Other variations exist which replace the attention unit with entirely untrainable transformations, such as the Fourier transform \cite{lee2022fnet}. These variants demonstrate that the specific dot product self-attention mechanism is not necessary to make a performant transformer. Motivated by the performance of these alternatives to the original transformer self-attention mechanism, our focus in this work is not to quantise the dot product self-attention, but instead propose a novel form of token mixing built from quantum primitives.

\subsection{Quantum computation} \label{sec:quantum}
In circuit-based quantum computing, a number $q$ of \emph{qubits} is manipulated. The \emph{state} $\ket{\psi} \in \mathbb{C}^2$ of each qubit is a normalised superposition (complex linear combination) of two \emph{computational basis} states
\begin{align}
    \ket{\psi} = \alpha \ket{0} + \beta \ket{1},\qquad
    \ket{0} = \begin{bmatrix}
        1\\0
    \end{bmatrix},\quad
    \ket{1} = \begin{bmatrix}
        0\\1
    \end{bmatrix},\qquad
    \abs{\alpha}^2 + \abs{\beta}^2 = 1, \quad \alpha, \beta \in \mathbb{C}.
\end{align}
The joint state of the qubits is an element of the tensor product $\bigotimes_{j=1}^q \mathbb{C}^2$, which is isomorphic to $\mathbb{C}^{2^q}$ as a vector space; the computational basis of this space is $\{\ket{c_{q-1}...c_0}: c_k \in \{0, 1\}\}$. For a $q$-qubit state and a positive integer $j=\sum_{k=0}^{q-1} c_k 2^k, c_k \in \{0,1\}$, we define $\ket{j} := \ket{c_{q-1}...c_0}$ (i.e. the $k$'th qubit is set to $\ket{c_k}$; for instance, $\ket{6} = \ket{110}$).
Quantum circuits are often drawn in diagrammatic form as in e.g. \eqref{fig:lcu}. In these diagrams, wires (represented by horizontal lines) can either be single qubits, or a group of qubits (i.e. a \emph{register}) when crossed by an oblique line. 

The state of a quantum system evolves through the application of \emph{gates}, each of which is represented by a unitary matrix $U$ (i.e. a matrix satisfying $U^{\dagger}U = I$, where $U^\dagger := \overline{U}^T$). Gates are drawn as boxes in circuit diagrams, and may act on single qubits or be \emph{entangling}, i.e. act on multiple qubits. An example of a single-qubit gate is the $R_{X}$ gate, defined as $R_{X}(\theta) := e^{-i\theta X}$, where $X$ is a Pauli matrix \cite{nielsenchuang}. The $R_Y$ and $R_Z$ gates are similarly defined in terms of the $Y$ and $Z$ Pauli matrices. The entangling gates we use are \emph{controlled} gates of the form $CU$, where the action of a gate $U$ on a \textit{target} register depends on the state of a \textit{control} register. For example, a $CR_X$ gate applies a $R_X$ gate to the target qubit when the control qubit is in state $\ket{1}$, and performs the identity operation $I$ if the control is in state $\ket{0}$; if the qubit is in a superposition, the operation is applied to each basis element by linearity. Diagramatically, a controlled gate is represented as a vertical line connecting a box on the target register to a circle on the control register, above which we indicate the state the gate is controlled on.

Measuring a qubit in state $\ket{\psi}$ with respect to the computational basis yields an output of $0$ with probability $\norm{\braket{0}{\psi}}^2$ and output $1$ with probability $\norm{\braket{1}{\psi}}^2$. Thus, quantum computation is inherently probabilistic; in general, several runs (often called \emph{shots}) of a circuit are needed to obtain the desired result. For instance, some quantum algorithms rely on \textit{postselection}, a process which discards all executions of a quantum circuit in which a postselected measurement did not yield a desired state. Postselecting in the $\ket{0}$ state is indicated in a circuit by a $\bra{0}$ placed near the end of a wire.

An \emph{observable} is represented by a hermitian matrix $O$, i.e. a matrix satisfying $O^{\dagger} = \overline{O}^T=O$. The expectation value associated with an observable $O$ applied to a state $\ket{\psi}$ is given by
\begin{align}
    \langle O \rangle_{\ket{\psi}} := \bra{\psi} O\ket{\psi} \in \mathbb{R}
\end{align}
Examples of common observables are the Pauli matrices introduced above. For further details on quantum computation, we refer the reader to \cite{nielsenchuang}.

\section{Model} \label{sec:model}

\subsection{Unitary token embedding} \label{sec:embedding}
As a first step, our model requires a quantum representation of each element of the vocabulary. Starting with a classical vector embedding $\vec{w}$ of a token $w$, we apply a linear layer $W_E$ to obtain a set of angles $\Vec{\theta}_w = W_E \vec{w}$. These are passed to a \emph{parameterised quantum circuit} (PQC) $U$ to prepare a unitary representation of $w$, $U_w := U(\theta_{\Vec{w}})$. PQCs are a common pattern in designing quantum machine learning models, and consist of parameterised gates, the angles of which are updated as part of a training procedure \cite{benedetti2019parameterized}. 
The specific choice of circuit $U$ represents a trade-off between expressibility and circuit size, and has implications on the tractability of the gradient, an issue which we discuss in \cref{sec:limitations}.

\subsection{Mixing via Linear Combination of Unitaries} \label{sec:lcu}
Now that we have prepared unitary circuit representations of the vocabulary, we implement the mixing of token information using the \emph{linear combination of unitaries} (LCU) procedure~\cite{childs2012}. Our aim is to have this mix take the form
\begin{align}\label{eq:m}
M := \sum_{j=0}^{n-1} b_j U_j
\end{align}
for some (possibly trainable) complex parameters $b_j$  satisfying $\sum_{j=0}^{n-1} \abs{b_j} = 1$, where $n$ is the window size. For this, we require a circuit $\usel$ that applies a unitary $U_k$ out of $\{U_j\}_{j \in {0, ..., n-1}}$ to a register conditional on the state of a control register being $\ket{k}$,
\begin{align}
    &\usel = \sum_{j=0}^{n-1} \ketbra{j}{j} \otimes U_j, \qquad \usel (\ket{k} \otimes  I) = \ket{k} \otimes U_k.
\end{align}
A simple quantum circuit representation of $\usel$ is shown below.
\begin{align} \label{fig:lcu}
    U_{\text{SEL}} = \vcenter{\hbox{\tikzfig{tikzfigs/lcu}}}
\end{align}
When the control register is initially in some normalised state $\ket{a} = \sum_{i=0}^{n-1} a_i \ket{i}$, the above circuit can produce the weighted superposition of unitaries $\sum_{j=0}^{n-1} |a_j|^2 U_j$. However, this is only the case if the control register is postselected to remain in state $\ket{a}$ (we refer the reader to \cite{martyn2021, dalzell2023quantum} for more details on the LCU construction). This implies that, given a unitary $\uprep$ which prepares $\ket{a}$,
\begin{align}
    \uprep\ket{0} = \ket{a} = \sum_{j=0}^{n-1} a_j \ket{j},
\end{align}
the circuit
\begin{align}
    \label{eq:lcu_circuit}
    U_M = (\uprep^\dagger \otimes I) \usel (\uprep \otimes I),
\end{align}
diagrammatically represented as
\begin{align} \label{fig:block_encoding}
    U_M = \vcenter{\hbox{\tikzfig{tikzfigs/block_encoding_shorthand}}},
\end{align}
prepares the superposition
\begin{align} \label{eq:superposition}
   (\bra{0} \otimes I) U_M (\ket{0} \otimes I) = M = \sum_{j=0}^{n-1} |a_j|^2 U_j
\end{align}
when the control register is prepared and postselected in the $\ket{0}$ basis state. We present the full derivation of \cref{eq:superposition} in \cref{thm:block_encoding} of \cref{sec:lcu_appendix}. Conjugating $U_M$ with two projection operators yields $M$, as shown in \cref{eq:superposition}. A matrix satisfying this criterion is said to be a \textit{block encoding} of $M$. Note that the coefficients of the superposition on the right-hand side of the equation form an L1-normalised \emph{real} vector. In our model, we prepare a superposition with complex coefficients, which can be achieved by adding a phase gate to the unitary associated with each token, such that
\begin{align}
    \label{eq:complex_lcu_coefficients}
    M:= \sum_{j=0}^{n-1} |a_j|^2 U'_j = \sum_{j=0}^{n-1} e^{i\gamma_j}|a_j|^2 U_j.
\end{align}
Throughout the rest of the text, we assume the coefficients of the LCU to be $\{b_j\}_{j \in 0 \dots n-1}$, where $b_j := e^{i\gamma_j}|a_j|^2 \in \mathbb{C}$.

\subsection{Nonlinearity via Quantum Singular Value Transformation}
\label{sec:qsvt}
The tools discussed thus far provide a method to prepare a linear combination of token embeddings encoded as unitary matrices. To provide richer interactions between the token unitaries, we use a method to prepare nonlinear transformations of this superposition. For this, we employ the Quantum Singular Value Transform (QSVT)~\cite{Gilyn2019}, which is able to apply polynomial transformations to a block-encoded matrix.
Given a block encoding $U_M$ of a matrix $M$, and a real polynomial $P_{\Vec{c}}$ of degree $d$, defined as:
\begin{align}
    P_{\Vec{c}}(x) &= c_d \cdot x^d +  c_{d-1} \cdot x^{d-1} + \dots + c_1 \cdot x + c_0
\end{align}
with the condition that $P_{\Vec{c}}$ obeys
\begin{align}
    |P_{\Vec{c}}(x)| \leq 1&,\qquad \forall x \in [-1, 1]\\
    \textrm{parity}(P_{\Vec{c}}) = d\;\textrm{mod}\;2&
\end{align}
the QSVT provides a circuit which prepares the matrix:
\begin{align}
\label{eq:polynomial}
    P_{\Vec{c}}(M) &= c_d M^d +  c_{d-1} M^{d-1} + \dots + c_1 M + c_0 I
\end{align}
The full expression for the QSVT is
\begin{align}
\begin{cases}\Pi_{\phi_1}U_M\left[\prod_{k=1}^{\frac{d-1}{2}}\Pi_{\phi_{2k}}U_M^\dagger\Pi_{\phi_{2k+1}}U_M\right], \qquad &d \; \text{odd}, \\
\left[\prod_{k=1}^{\frac{d}{2}}\Pi_{\phi_{2k-1}}U_M^\dagger\Pi_{\phi_{2k}}U_M\right], \qquad &d \; \text{even},
\end{cases}
\end{align}
where $\Pi_\phi := e^{i\phi(2\ketbra{0}{0}-I)}$ acts as a controlled $R_Z$ rotation on a newly introduced ancilla qubit, and the angles $\phi_k$ are fixed and chosen so as to implement the desired polynomial (see~\cite{martyn2021, Gilyn2019}). A circuit implementing the QSVT for a cubic polynomial (i.e. $d=3$) is illustrated below, where $\oplus$ is a NOT operation that is applied if the control register is in state $\ket{0}$; this condition is represented by the projection $\Pi := \ketbra{0}{0}$. The blue box highlights the subcircuit which is repeated for higher degree polynomials.
\begin{align} \label{fig:qsvt}
    \vcenter{\hbox{\scalebox{0.8}{\tikzfig{tikzfigs/qsvt_shorthand}}}}
\end{align}

We would like to be able to effect polynomial transformations of $M$ which do not have parity $\in \{0, 1\}$. For this, we observe that any polynomial can be split into the sum of an even (all even powers) and an odd (all odd powers) parity polynomial
\begin{align}
    P_{\Vec{c}} = P_{\textit{odd}} + P_{\textit{even}}.
\end{align}
Using this decomposition, it is possible to prepare $P_{\Vec{c}}(M)$ with unknown parity using the following LCU circuit, using one additional postselected ancilla qubit.

\begin{align} \label{fig:arbitrary_poly}
    \vcenter{\hbox{\scalebox{0.8}{\tikzfig{tikzfigs/arbitrary_parity_svt}}}}
\end{align}
\subsection{Quixer} \label{sec:quixer}
Having described each of the components of Quixer, we now provide an end-to-end description of the model. Given an input sequence $\{w_j\}_{j\in 0 \dots n-1}$, Quixer begins by preparing a $q$-qubit unitary circuit $U_{w_j} := U(\theta_{\Vec{w}_j})$ for each token, using a trainable linear matrix $W_E$ to obtain a set of PQC for each token embedding $\theta_{\Vec{w}} = W_E \Vec{w}$, as described in \cref{sec:embedding}. 

An LCU circuit $U_M$ is then instantiated with trainable complex coefficients $\{b_j\}_{j\in 0 \dots n-1}$, to prepare a linear combination of the token unitaries $M_{\vec{b}, \theta} = \sum_j^{n-1} b_j U_{w_j}$ as described in \cref{sec:lcu}. Next, a polynomial $P_{\Vec{c}}$ of degree $d$ with trainable coefficients $\Vec{c}$ is applied to the LCU via a QSVT to obtain $P_{\Vec{c}}(M_\theta)$, as outlined in \cref{sec:qsvt}. 
This circuit is then applied to a $\ket{0}$ state on the data register, followed by a trainable PQC $U_{\textit{FF}}$ on the data register, resulting in the final (unnormalised) quantum state
\begin{align}
    \ket{\psi} = U_{\textit{FF}}P_{\Vec{c}}\big(M_{\vec{b}, \theta}\big)\ket{0}.
\end{align}
Information from this state is then read out by measuring multiple expectation values
\begin{align}
    o_k =  \langle O_k \rangle_{\ket{\psi}},
\end{align}
resulting in a classical real vector $\Vec{o}$. The final output of the model is then obtained by applying a fully-connected feed-forward neural network $f_{\textit{out}}$ to $\vec{o}$,
\begin{align}
    \Vec{y} = f_{\textit{out}}(\vec{o}).
\end{align}

\subsubsection{Attention in Quixer}
\label{sec:attention}
In the classical transformer, the dot product self-attention mechanism operates by preparing a sum of \textit{value} vectors, weighted by pairwise interactions captured using \textit{query}-\textit{key} products \cite{vaswani2017}. In contrast, Quixer captures interactions between multiple tokens through the composition of their representative unitaries.  A weighted sum of such interactions is computed to prepare the final state. While sequential composition with a unitary preserves the norm of the quantum state, it changes the magnitude of the expectation value for a particular observable. This allows Quixer to model attention through unitary composition. For instance, when implementing a quadratic polynomial using the QSVT procedure, Quixer prepares the quantum state described by
\begin{align}
    P_{\vec{c}}(M_{\Vec{b}, \theta}) \ket{0} &= c_2 \sum_{j,k = 0}^{n-1} b_j b_k U_j U_k \ket{0} + c_1 \sum_{j=0}^{n-1} b_j U_j \ket{0} + c_0 \ket{0}.
\end{align}
The action of this model can be seen as the sum of pairwise interactions (again, captured through composition of word unitaries), single-token terms, and a final bias term. Here pairwise interactions are computed between all \textit{skip-bigrams} (pairs of tokens, not necessarily adjacent) in the context. Higher degree polynomials, analogously, compute interactions between skip-$k$-grams; we write this out in \cref{sec:attention_higher_d}.

\subsection{Resource estimates}
\label{sec:resources}

\subsubsection{Postselection probability} \label{sec:postselection}
As discussed in \cref{sec:lcu}, the preparation of the LCU $M_{\vec{b}, \theta}$ is contingent on postselecting the control register in state $\ket{0}$. The probability of this succeeding, and of producing the desired state $M_{\vec{b}, \theta}\ket{0}$, is equal to
\begin{align}
    p_{M} &= \norm{M_{\vec{b}, \theta}\ket{0}}^2 \\
    &=\bra{0} M_{\vec{b}, \theta}^\dagger M_{\vec{b}, \theta}\ket{0} \\
    &=\sum_{j,k=0}^{n-1} \overline{b}_j b_k \bra{0}U_j^\dagger U_k \ket{0} \\
    &= \sum_{j < k} \overline{b}_j b_k \bra{0}U_j^\dagger U_k\ket{0} + \overline{b}_k b_j\bra{0}U_k^\dagger U_j\ket{0} + \sum_{j=0}^{n-1} |b_j|^2 \\
    &= \sum_{j < k} 2 \cdot \mathrm{Re}[\overline{b}_j b_k\bra{0}U_j^\dagger U_k\ket{0}] + \sum_{j=0}^{n-1} |b_j|^2.
\end{align}
In turn, the probability of successfully preparing the desired polynomial using the QSVT is
\begin{align} \label{eq:final_postselection}
    p = \norm{P(M_{\Vec{b}, \theta})\ket{0}}^2.
\end{align}

However, neither expression yields a lower bound on the success probability, since the minimum value of both is 0 without further assumptions on the circuit structure. Previous works have lower bounded the success probability of LCU and QSVT circuits through spectral analysis of the matrices $M$ and $P(M)$~\cite{Gilyn2019, watts2023quantum}. Such analysis, however, requires additional assumptions on the training algorithm, parameter space and unitary token embedding choice, as it depends on the concrete form that these unitaries take. Thus, in general, our model is not immediately amenable to such analysis without implementing further constraints. For the concrete Quixer implementation provided in \cref{sec:results}, we evaluate the postselection probability empirically.

\subsubsection{Gate and qubit complexity}
\label{sec:complexity}
Given a sequence of length $n$, our model uses $q$ qubits for the data register, $\lceil \log_2(n) \rceil$ qubits for the control register, and 3 ancillae: one for the QSVT procedure, one for the combination of odd and even parity terms, and one for the implementation of the QSVT projectors $\Pi$ (detailed below, and in \cref{sec:toffoli}). Thus, the asymptotic total number of qubits is
\begin{align}
\label{eq:qubit_complexity}
    O(q + \mathrm{log}_2(n)).
\end{align}

The runtime of a quantum circuit is characterised by its gate complexity. Following the construction in \cref{sec:toffoli}, a single-qubit gate can be controlled on a given number of qubits using a gate count linear in that number. Let us assume that each token unitary has at most $g$ gates when written in terms of single-qubit operations controlled on at most one qubit (note that $g$ is well-defined, since any unitary can be decomposed into $CX$ gates and single-qubit gates~\cite{nielsenchuang}). Then, each token unitary controlled on $\log_2(n)$ qubits can be implemented using $O(g\log_2(n))$ gates, implying that the LCU component of the circuit can be implemented using $O(n g \log_2(n))$ gates. As explained in \cref{sec:toffoli}, each projection $\Pi$ in the QSVT can also be implemented using a number of gates linear in the number of qubits in the control register. For a QSVT implementing a polynomial of degree $d$, this yields an overall asymptotic gate count of $O(d n  g \cdot \log_2(n))$
If each unitary is a PQC with $l$ layers, each of which contains a number of parameterised gates proportional to the qubits in the data register (i.e. $g \propto ql$), this corresponds to $O(d n  q l \cdot \log_2(n))$.

The above complexity can be further improved by leveraging the technique in \cite{Babbush2018}. For an additional $\log_2(n)-2$ ancilla qubits (which does not change the qubit complexity in \cref{eq:qubit_complexity}), each token unitary controlled on $\log_2(n)$ qubits can be implemented using $O(g)$ instead of  $O(g\log_2(n))$ gates. This results in a gate complexity of 
\begin{align}
    O(d n g),
\end{align}
or, again, if each unitary is a PQC with $l$ layers, each of which contains a number of parameterised gates proportional to the qubits in the data register (i.e. $g \propto ql$),
\begin{align}
    O(d n  q l).
\end{align}

\section{Experimental results} \label{sec:results}

\subsection{Setup}\label{sec:setup}
To evaluate our model in a practical setting, we apply an instance of Quixer to a language modelling task. Here, given a sequence of words $\{w_j\}_{j \in 0\dots n-1}$ (which we take to be our tokens), the model must predict the subsequent word $w_n$. We evaluate our model on the Penn Treebank (PTB) dataset, which consists of 966K training tokens, 77K validation tokens, and 86K test tokens \cite{marcus1993building}. We obtain this from the HuggingFace datasets package\footnote{\url{https://huggingface.co/datasets/ptb_text_only}}.

We implemented Quixer as a Torch module, and used TorchQuantum \cite{hanruiwang2022quantumnas}, a Torch-native quantum computation framework, to simulate the PQCs and compute expectation values. When simulating the model classically, it is not necessary to prepare the explicit circuits for the LCU and QSVT operations. Instead, we apply each token unitary to a copy of the data register, and directly prepare a linear combination weighted by $\vec{b}$. The polynomial implemented by the QSVT can then be directly computed by repeating the LCU application on the data register $d$ times. For further details on the practical classical simulation of Quixer, we refer the reader to the code provided in the GitHub repository \href{https://github.com/CQCL/Quixer/}{github.com/CQCL/Quixer}.

Each token unitary is composed of 4 layers of ``circuit 14'' out of several parameterised circuits that \citet{Sim2019} studies. Our choice is motivated by the high expressibility and entangling capability of this circuit. Note, however, that this choice is not canonical, and any parameterised circuit can be employed in our architecture. One layer of this circuit alternates layers of $R_Y$ and $CR_X$ gates as follows
\begin{align} \label{eq:a14}
    \prod_{j=1}^q CR_{X_{j,j+1}}(\theta_{4,j}) \prod_{j=1}^q R_{Y_j}(\theta_{3,j}) \prod_{j=1}^q CR_{X_{j,j-1}} (\theta_{2,j}) \prod_{j=1}^q R_Y(\theta_{1,j}).
\end{align}
For $l$ layers of circuit 14 acting on $q$ qubits, the number of parameterised gates (and parameters) is $4lq$. Below, we show a single layer of this circuit on $3$ qubits. 
\begin{align}
\label{fig:a14}
\begin{adjustbox}{width=0.75\textwidth}
\begin{quantikz}
\lstick{} & \gate{R_Y} & \ctrl{1} & \qw & \gate{R_X} & \gate{R_Y} & \gate{R_X} & \qw & \ctrl{2} & \qw\\
\lstick{} & \gate{R_Y} & \gate{R_X} & \ctrl{1} & \qw & \gate{R_Y} & \ctrl{-1} & \gate{R_X} & \qw & \qw \\
\lstick{} & \gate{R_Y} & \qw & \gate{R_X} & \ctrl{-2} & \gate{R_Y} & \qw & \ctrl{-1} & \gate{R_X} & \qw 
\end{quantikz}
\end{adjustbox}
\end{align}
We implement a trainable cubic polynomial. For this configuration, each token unitary has 96 parameters, which we prepare by applying $W_E$ to a 512-dimensional word embedding.  In our experiments, we compute the expectation values of the $X$, $Y$ and $Z$ Pauli operators independently for each qubit, resulting in a vector $\Vec{o} \in \mathbb{R}^{3q}$. $f_{out}$, consisting of a 2-layer feed-forward neural network with a ReLu nonlinearity in between, is used to map the expectation values to a probability distribution over tokens.

We compare Quixer against an LSTM \cite{hochreiter1997long} and the Transformer~\cite{vaswani2017}, as provided in the PyTorch~\cite{paszke2019pytorch} package, and a PyTorch implementation of the FNet~\cite{lee2022fnet}, which we include in this project's source code. The LSTM represents a high-performant, non-attention-based architecture. FNet is a simplified version of the Transformer architecture which replaces the multi-head self-attention unit with a 2-dimensional Fourier transform applied to the input matrix. The residuals, layer normalisation and MLPs from the original Transformer model are retained. Finally, the Transformer is the main component of the majority of modern language models.

Quixer processes a sequence of tokens of length $n$ and produces a single subsequent token as output. We use a context length of $32$, and stride by $1$ token per step to generate each token in the dataset. We provide the classical baselines with identical setup of window size and stride. The performance of each model is measured by \textit{perplexity} (PPL), which is defined as the exponent of the cross entropy loss; see \cite{jurafsky2022speech} for more details. For each of the classical baselines, we use embedding sizes of $96$ and $128$. An embedding dimension of $96$ restricts the classical models to the number of angles Quixer uses to parameterise each token unitary.

All models were trained using the Adam optimiser \cite{kingma2014adam}, and the learning rate was varied according to a cosine annealing schedule \cite{loshchilov2016sgdr}. Each batch contains $32$ contexts, and each context produces $32$ tokens, yielding an effective batch size of $1024$. All models were trained for $30$ epochs, and the best epoch was selected based on perplexity on the validation set. Learning rates, dropout, and weight decay were tuned per model, and are described in \cref{sec:hyperparameters}. Quixer was trained on one A100 GPU, on which $30$ epochs took $3$ hours $45$ minutes.

\subsection{Results}

\Cref{tab:ppltab} shows the perplexities obtained by each model for the word-level language modelling task on the PTB dataset. All results are reported averaged over $10$ runs, along with error bars of one standard deviation. We observe that Quixer outperforms both sizes of LSTM, performs competitively with the $96$-dimension FNet, and only marginally worse than the $128$-dimension FNet. The Transformer outperforms all other models, even with a $96$-dimension embedding.

We obtain results similar to FNet, a model which is known to achieve results competitive with Transformer-based language models at scale \cite{lee2022fnet}. This is an encouraging result for a first quantum transformer applied to language modelling, and provides a validation of the Quixer architecture. Note, however, that these results are far from those obtained by state-of-the-art models, and that a direct comparison with such models is not the objective of this work.

\begin{table}[ht]
\centering
\def\arraystretch{1.25}%
\setlength{\tabcolsep}{1.25em}
\begin{tabular}{cccc}
\hline
\textbf{Model}               & \textbf{Dimension} & \textbf{Layers} & \textbf{PPL}   \\ \hline\hline
\multirow{2}{*}{LSTM}        & 96                 & 2               & 144.3 (\textpm 9.1) \\
                             & 128                & 2               & 127.1 (\textpm 3.1) \\ \hline
\multirow{2}{*}{FNet}        & 96                 & 2               & 120.5 (\textpm 1.0) \\
                             & 128                & 2               & 117.7 (\textpm 0.8) \\ \hline
\multirow{2}{*}{Transformer} & 96                 & 1               & 100.1 (\textpm 0.2) \\
                             & 128                & 1               & \textbf{97.0 (\textpm 0.3)}  \\ \hline\hline
Quixer (Ours)                       & 6 qubits           & cubic           & 122.0 (\textpm 2.2) \\ \hline
\end{tabular}
\vspace{0.75em}
\caption{Results for word-level language modelling on Penn Treebank}
\label{tab:ppltab}
\end{table}

\Cref{fig:success_violin} plots the distribution of postselection probabilities on the PTB test dataset for $10$ runs of our model with the same hyperparameters, but different seeds. The minimum of mean success probabilities across a run is 0.0159. The mean success probability across all runs is 0.0757~($\pm$0.0460). As a reference, the Boltzmann machine, a model which can be recovered as a special case of Quixer (see \cref{sec:framework}), involves the preparation of Gibbs states~\cite{Gilyn2019, coopmans2023sample}. This has a success probability lower bound given by $2^{-q}$, which evaluates to $0.0156$ for $q=6$. Further analysis is necessary to characterise the scaling of our model's success probability with system size.

\begin{figure}[ht]
    \centering
    \includegraphics[scale=0.6]{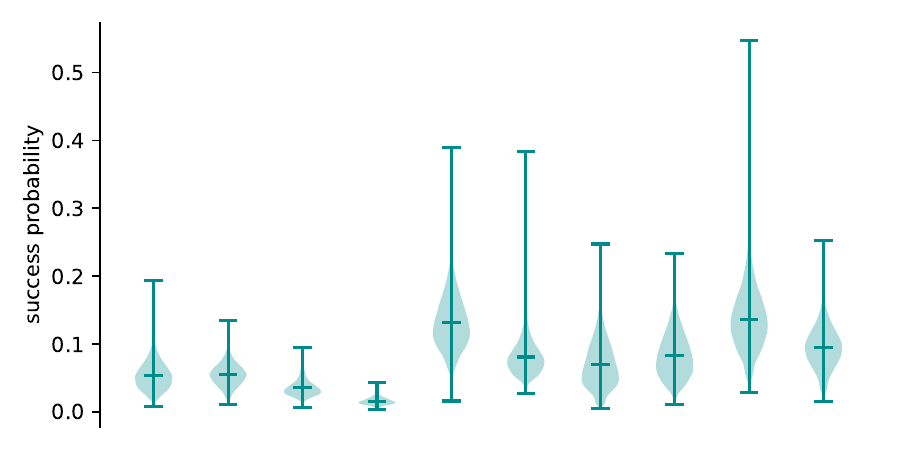}
    \caption{Distribution of postselection success probabilities (as in \cref{eq:final_postselection}) across 10 different seeds. Horizontal bars indicate mean and extrema for each seed.}
    \label{fig:success_violin}
\end{figure}

\section{Quixer as a framework}
\label{sec:framework}
Our experiments employ a very flexible instance of Quixer, with several sets of parameters trained in-task. Subsets of these, however, may be fixed \textit{a priori} to yield new models. The polynomial coefficients $\vec{c}$ need not be trained in-task, and specific polynomials may be chosen to achieve a desirable balance between expressibility, success probabilities, and magnitudes of the gradients. For example, fixing the polynomial to be an approximation of the matrix exponential and the token unitaries to be single Pauli operators, one recovers the quantum Boltzmann machine~\cite{coopmans2023sample}.

The instance of Quixer presented in \cref{sec:results} is a single layer implementing a cubic nonlinearity. By repeating the QSVT circuit for a specific polynomial $P_{\Vec{c}}$, with different LCU encodings of the data, it is possible to meaningfully extend Quixer to a multi-layer setting. For instance, a 2-layer Quixer model implemented in this manner would prepare the state
\begin{align}
    \ket{\psi} = U_{\textit{FF}}P_{\Vec{c}}\big(M_{\vec{b'}, \theta'}\big) P_{\Vec{c}}\big(M_{\vec{b}, \theta}\big)\ket{0}.
\end{align}

Note also that the token unitaries $U_w$ in our experiments were chosen to be highly-expressive PQCs. This is known to hamper the trainability of the model as the number of qubits increases~\cite{McClean2018}. This expressivity may be adjusted to mitigate this problem, for example through the use of matchgate circuits \cite{matos2023}. We discuss this issue further in  \cref{sec:limitations}.

\section{Limitations} \label{sec:limitations}
A significant challenge faced by contemporary quantum machine learning models is that the currently available methods to obtain gradients on a quantum computer have been shown to take time polynomial in the number of parameters~\cite{abbas2023on}, which is prohibitive if this number is to approach those used in modern large language models. Another well-known problem faced by quantum machine learning models is a concentration of measure phenomenon that causes gradients to exponentially vanish as the number of qubits in the model increases~\cite{McClean2018}. While some quantum models are not affected by this, such as those comprised of matchgate circuits~\cite{matos2023}, it is believed that most models evading this issue can be simulated classically~\cite{cerezo2024does}, precluding any quantum advantage. Finding an instance of Quixer that does not suffer from vanishing gradients while being expressive enough to not be amenable to classical simulation is left to future work. 

As outlined in \cref{sec:results}, training and evaluation has been done fully classically in this work. While this is representative of the model's performance, it is necessary to consider implementation overheads on a real device, along with any noise if running on an architecture which is not fully fault tolerant. Moreover, comparisons between classical models and quantum counterparts can be difficult due to their inherently different data representations. While typical classical models manipulate parameters and intermediate representations as real vectors, quantum models are limited to low-dimensional parameterisation of unitary matrices on exponential-sized complex-valued systems. Another limitation of the results presented here is the relatively small scale of models considered, as the exponential scaling of vector space dimension with the number of qubits precludes the simulation of large-scale quantum systems.

\section{Conclusion} \label{sec:conclusion}
In this work, we have described Quixer, a new quantum transformer architecture. We successfully applied it to a language modelling task on the Penn Treebank dataset, obtaining encouraging results. This represents the first example of a quantum transformer applied to a real-world language modelling task. We further described how Quixer can be extended to a family of quantum transformer models by e.g. making a particular choice of the coefficients of the polynomial transformation effected by the QSVT. Future work will involve finding new instances of this framework which make favourable trade-offs between gradient magnitudes and classical simulability, to scale the model up closer to the performance of contemporary classical language models.

\section{Acknowledgements}

We thank Tuomas Laakkonen, Fr\'{e}d\'{e}ric Sauvage, Marcello Benedetti and Konstantinos Meichanetzidis for feedback on this manuscript. We further thank Tuomas Laakkonen for insightful discussions, for suggesting the use of the Toffoli construction in~\cite{ancillas}, and for suggesting the use of the technique in~\cite{Babbush2018} to reduce the complexity of our model in \cref{sec:complexity}.

{
\small

\bibliography{bibliography.bib}
}

\appendix

\section{Appendix}

\subsection{Attention in Quixer, $d > 2$}
\label{sec:attention_higher_d}

As mentioned in \cref{sec:attention}, higher degree polynomials in the QSVT compute interactions between skip-k-grams. Indeed, from \cref{eq:m} and \cref{eq:polynomial}, 
\begin{align}
    P(M_{\Vec{b}, \theta}) &= P\left (\sum_j^{n-1} b_j U_j \right) \\
    &= \sum_{k=0}^d c_k \left(\sum_{j=0}^{n-1} b_j U_j\right)^k \\
    &= \sum_{k=1}^d c_k \left [ \sum_{\vec{\alpha} \in \{1,...,n-1\}^k} b_{\alpha_1} ... b_{\alpha_k} U_{\alpha_1} ... U_{\alpha_k} \right ] + c_0 I.
\end{align}
This is the sum of the composition of unitaries associated with all skip-$k$-grams, for $k \in \{1 \dots d\}$.

\subsection{Auxiliary result}
\label{sec:lcu_appendix}

The following lemma proves that the LCU circuit defined in \cref{eq:lcu_circuit} yields the desired result.
\begin{lemma}
\label{thm:block_encoding}
    For the circuit defined in \cref{eq:lcu_circuit}, it is the case that
    \begin{align}
       (\bra{0} \otimes I) U_M (\ket{0} \otimes I) = \sum_{i=0}^{n-1} |a_i|^2 U_i.
    \end{align}
\end{lemma}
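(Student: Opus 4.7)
The plan is to evaluate the sandwich $(\bra{0}\otimes I)\, U_M\, (\ket{0}\otimes I)$ by propagating the initial control state through each of the three factors of $U_M$ in turn, using only (i) the definition $\uprep \ket{0} = \ket{a} = \sum_j a_j \ket{j}$, (ii) the definition $\usel = \sum_j \ketbra{j}{j}\otimes U_j$, and (iii) the inner product $\braket{j}{k} = \delta_{jk}$.

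First I would absorb the outer $\uprep$ factors into the projectors: since the data register is acted on by the identity, $(\ket{0}\otimes I)$ followed by $(\uprep \otimes I)$ produces $\ket{a}\otimes I$, and symmetrically $(\bra{0}\otimes I)(\uprep^\dagger \otimes I) = \bra{a}\otimes I$, where $\bra{a} = \sum_j \overline{a_j}\bra{j}$. This reduces the statement to showing
\begin{align}
(\bra{a}\otimes I)\,\usel\,(\ket{a}\otimes I) = \sum_{i=0}^{n-1} |a_i|^2\, U_i.
\end{align}
Next I would apply $\usel$ to $\ket{a}\otimes I = \sum_i a_i \ket{i}\otimes I$, using its definition together with $\ketbra{j}{j}\ket{i} = \delta_{ij}\ket{i}$, to obtain $\sum_i a_i \ket{i}\otimes U_i$. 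Finally I would contract on the left with $\bra{a}\otimes I = \sum_k \overline{a_k}\bra{k}\otimes I$, using $\braket{k}{i}=\delta_{ki}$ to collapse the double sum to the diagonal $\sum_i \overline{a_i} a_i\, U_i = \sum_i |a_i|^2 U_i$, which is the claim.

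There is no real obstacle here: the lemma is a direct computation, and the only thing to be careful about is the tensor-product bookkeeping (keeping the control register and data register factors separate throughout, and making sure the adjoint $\uprep^\dagger$ acting on $\bra{0}$ produces $\bra{a}$ with the correct complex conjugates). I would write the three steps above as a short displayed chain of equalities, making each use of the definitions of $\uprep$ and $\usel$ explicit, and that constitutes the proof.
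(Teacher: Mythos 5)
Your proposal is correct and follows essentially the same route as the paper's proof: substitute $\uprep\ket{0}=\ket{a}$ on both sides, expand $\usel$ in the control basis, and collapse the resulting sums via orthonormality to the diagonal terms $\sum_i |a_i|^2 U_i$. The only cosmetic difference is that you apply $\usel$ to $\ket{a}\otimes I$ before contracting, whereas the paper writes out the triple sum with Kronecker deltas in one display; the content is identical.
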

\begin{proof}
\begin{align}
    M :=& (\bra{0} \otimes I) U_M (\ket{0} \otimes I)\\
    =& (\bra{0}\uprep^{\dagger}  \otimes I)\usel (\uprep\ket{0} \otimes I)\\
    =& (\bra{a} \otimes I) \usel (\ket{a} \otimes I)\\
    =& \big(\sum_{i=0}^{n-1} \overline{a}_i \bra{i} \otimes I\big) \sum_{j=0}^{n-1} \ketbra{j}{j} \otimes U_j \big(\sum_{k=0}^{n-1} a_k \ket{k} \otimes I\big)\\
    =& \sum_{i,j,k=0}^{n-1} \delta_{ij} \delta_{jk} a_i \overline{a}_k U_j\\
    =& \sum_i^{n-1} |a_i|^2 U_i
\end{align}
\end{proof}

\subsection{Toffoli construction} \label{sec:toffoli}
\citet{ancillas} describes a method to construct a Toffoli gate~\cite{nielsenchuang} controlled on $n$ qubits using a number of gates which is linear in $n$. If $U$ is a single-qubit unitary, by conjugating it by these Toffoli gates as is done in \eqref{fig:mcu}, one can implement a version of that unitary controlled on $n$ qubits in a linear number of gates.

\begin{align} \label{fig:mcu}
    \vcenter{\hbox{\scalebox{0.8}{\tikzfig{tikzfigs/controlled_u}}}}
\end{align}
\newpage
\subsection{Hyperparameters}
\label{sec:hyperparameters}
\cref{tab:seeds} lists the seed and hyperparameter used for each of our runs, as used by the source code.
\begin{table}[ht!]
\begin{tabular}{lll|lll}
\textbf{Model} & \textbf{Dimension} & \textbf{Seed} & \textbf{Model} & \textbf{Dimension} & \textbf{Seed} \\ \hline
FNet           & 128                & 865026        & Transformer    & 128                & 753912        \\
FNet           & 128                & 896680        & Transformer    & 128                & 740072        \\
FNet           & 128                & 157158        & Transformer    & 128                & 73815         \\
FNet           & 128                & 912324        & Transformer    & 128                & 233689        \\
FNet           & 128                & 336635        & Transformer    & 128                & 791410        \\
FNet           & 128                & 47752         & Transformer    & 128                & 845853        \\
FNet           & 128                & 521512        & Transformer    & 128                & 256815        \\
FNet           & 128                & 410043        & Transformer    & 128                & 616390        \\
FNet           & 128                & 906901        & Transformer    & 128                & 494215        \\
FNet           & 128                & 528631        & Transformer    & 128                & 664970        \\
FNet           & 96                 & 544610        & Transformer    & 96                 & 520839        \\
FNet           & 96                 & 435960        & Transformer    & 96                 & 799108        \\
FNet           & 96                 & 855886        & Transformer    & 96                 & 102749        \\
FNet           & 96                 & 255092        & Transformer    & 96                 & 143407        \\
FNet           & 96                 & 325162        & Transformer    & 96                 & 687589        \\
FNet           & 96                 & 44219         & Transformer    & 96                 & 418518        \\
FNet           & 96                 & 613756        & Transformer    & 96                 & 861454        \\
FNet           & 96                 & 969277        & Transformer    & 96                 & 967387        \\
FNet           & 96                 & 977070        & Transformer    & 96                 & 725501        \\
FNet           & 96                 & 27859         & Transformer    & 96                 & 178408        \\
LSTM           & 128                & 110433        & Quixer         & 6 qbs              & 363500        \\
LSTM           & 128                & 581765        & Quixer         & 6 qbs              & 844031        \\
LSTM           & 128                & 78554         & Quixer         & 6 qbs              & 707858        \\
LSTM           & 128                & 776951        & Quixer         & 6 qbs              & 734571        \\
LSTM           & 128                & 312284        & Quixer         & 6 qbs              & 134017        \\
LSTM           & 128                & 586743        & Quixer         & 6 qbs              & 246154        \\
LSTM           & 128                & 125848        & Quixer         & 6 qbs              & 631481        \\
LSTM           & 128                & 956941        & Quixer         & 6 qbs              & 168344        \\
LSTM           & 128                & 80607         & Quixer         & 6 qbs              & 356044        \\
LSTM           & 128                & 889548        & Quixer         & 6 qbs              & 682772        \\
LSTM           & 96                 & 124900        &                &                    &               \\
LSTM           & 96                 & 951780        &                &                    &               \\
LSTM           & 96                 & 927932        &                &                    &               \\
LSTM           & 96                 & 217285        &                &                    &               \\
LSTM           & 96                 & 627139        &                &                    &               \\
LSTM           & 96                 & 694772        &                &                    &               \\
LSTM           & 96                 & 311736        &                &                    &               \\
LSTM           & 96                 & 939997        &                &                    &               \\
LSTM           & 96                 & 687992        &                &                    &               \\
LSTM           & 96                 & 446349        &                &                    &              
\end{tabular}
\caption{Seeds used for each run for each model, randomly generated using \texttt{torch.randint}.}
\label{tab:seeds}
\end{table}

\end{document}